\newcommand{\jaman}[1]{{\color{blue}#1}}
\newcommand{\footremember}[2]{%
    \footnote{#2}
    \newcounter{#1}
    \setcounter{#1}{\value{footnote}}%
}
\newcommand{\footrecall}[1]{%
    \footnotemark[\value{#1}]%
} 
\tikzstyle{startstop} = [rectangle, rounded corners, minimum width=1cm, minimum height=1cm,text centered, draw=black]
\tikzstyle{io} = [trapezium, trapezium left angle=70, trapezium right angle=110, minimum width=1cm, minimum height=1cm, text centered, draw=black, fill=blue!30]
\tikzstyle{method} = [rectangle, rounded corners, minimum width=1cm, minimum height =1cm, text centered, draw=black]
\tikzstyle{process} = [rectangle, minimum width=1cm, minimum height=1cm, text centered, draw=black]
\tikzstyle{decision} = [diamond, minimum width=0.5cm, minimum height=0.5cm, text centered, draw=black, fill=green!30]
\tikzstyle{arrow} = [thick,->,>=stealth]
\def\PP{{{\rm l}\kern - .15em {\rm P} }}
\def\PN2{{\PP_{N}-\PP_{N-2}}}
\newcommand{\bu}{\boldsymbol{u}}
\newcommand{\bv}{\boldsymbol{v}}
\newcommand{\bn}{\boldsymbol{n}}
\newcommand{\bV}{\boldsymbol{V}}
\newcommand{\bx}{\boldsymbol{x}}
\newcommand{\deleted}[1]{{}}
\begin{document}
\title{SELF-ADAPTIVE PHYSICS-INFORMED NEURAL NETWORK
FOR FORWARD AND INVERSE PROBLEMS IN HETEROGENEOUS POROUS FLOW}

\author{
Md. Abdul Aziz\footremember{uabm}{D\MakeLowercase{epartment of} M\MakeLowercase{athematics}, U\MakeLowercase{niversity of} A\MakeLowercase{labama at} B\MakeLowercase{irmingham}, AL 35294, USA.}\footremember{support}{P\MakeLowercase{artially} S\MakeLowercase{upported by the} N\MakeLowercase{ational} S\MakeLowercase{cience} F\MakeLowercase{oundation grant} DMS-2425308.}\footnote{C\MakeLowercase{orrespondence: azizm@uab.edu}}%
\and Thilo Strauss\hspace{1mm}\footnote{D\MakeLowercase{epartment of} AI \MakeLowercase{and} A\MakeLowercase{dvanced} C\MakeLowercase{omputing} X\MakeLowercase{i\'an} J\MakeLowercase{iaotong-}L\MakeLowercase{iverool} U\MakeLowercase{niversity}, S\MakeLowercase{uzhou}, 215000, J\MakeLowercase{iangsu}, C\MakeLowercase{hina}.}
\and Muhammad Mohebujjaman\footrecall{uabm}\;\;\footrecall{support}
\and Taufiquar Khan\footremember{UNCC}{D\MakeLowercase{epartment of }M\MakeLowercase{athematics and} S\MakeLowercase{tatistics}, \MakeLowercase{and} C\MakeLowercase{enter for} T\MakeLowercase{rustworthy} A\MakeLowercase{rtificial} I\MakeLowercase{ntelligence through}
M\MakeLowercase{odel} R\MakeLowercase{isk} M\MakeLowercase{anagement} (TAIM\MakeLowercase{ing} AI), U\MakeLowercase{niversity of} N\MakeLowercase{orth} C\MakeLowercase{arolina} (UNC) \MakeLowercase{at} C\MakeLowercase{harlotte}, NC 28223, USA.}}

\maketitle

\begin{abstract} 
We develop a self-adaptive physics-informed neural network (PINN) framework that reliably solves forward Darcy flow and performs accurate permeability inversion in heterogeneous porous media. In the forward setting, the PINN predicts velocity and pressure for discontinuous, piecewise-constant permeability; in the inverse setting, it identifies spatially varying permeability directly from indirect flow observations. Both models use a region-aware permeability parameterization with binary spatial masks, which preserves sharp permeability jumps and avoids the smoothing artifacts common in standard PINNs. To stabilize training, we introduce self-learned loss weights that automatically balance PDE residuals, boundary constraints, and data mismatch, eliminating manual tuning and improving robustness, particularly for inverse problems. An interleaved AdamW–L-BFGS optimization strategy further accelerates and stabilizes convergence. Numerical results demonstrate accurate forward surrogates and reliable inverse permeability recovery, establishing the method as an effective mesh-free solver and data-driven inversion tool for porous-media systems governed by partial differential equations.
\end{abstract}


{\bf Key words.} Finite Element Method; Physics-informed neural networks; self-adaptive loss weighting; permeability inversion; heterogeneous porous media; discontinuous coefficients; Darcy flow

\medskip
{\bf Mathematics Subject Classifications (2020)}:  65N22, 65N30, 76S05, 35R30, 68T07

\pagestyle{myheadings}
\thispagestyle{plain}

\markboth{\MakeUppercase{Self-Adaptive PINN for Forward and Inverse Modeling of Porous Flow}}{\MakeUppercase{M. A. Aziz, T. Strauss, M. Mohebujjaman, and T. Khan}}

\section{Introduction} 

To present the new Physics-Informed Neural Network (PINN)-based forward and inverse problems, we consider the steady-state incompressible Darcy flow represented by the following Partial Differential Equations (PDEs) \cite{burman2022two,strauss2015statistical}:
\begin{align}
\bu + K(\bx)\nabla p &= \textbf{0} \hspace{10mm} \text{in } \Omega, \label{momentum}\\
\nabla \cdot \bu &= f(\bx) \hspace{4.5mm} \text{in } \Omega, \label{compress}\\
p &= p_B \hspace{7mm} \text{on } \Gamma_D,\label{dirichletbc} \\
\bu \cdot \bn &= u_B \hspace{7mm} \text{on } \Gamma_N,\label{neumanbc}
\end{align}
where $\bu$ is the Darcy velocity, \( p \) is the pressure, \( K \) is the spatially varying permeability, $\bx$ is the spatial variable, and $f(\bx)$ is the mass source,  \( \Omega  \) is the domain, and $\bn$ is the outward unit normal vector on the boundary. The $p_B$ and $u_B$ are the pressure on the Dirichlet boundary $\Gamma_D$ and the normal component of velocity on the Neumann boundary $\Gamma_N$, respectively.  The domain \( \Omega\) is partitioned into two regions with distinct permeability values. The \textit{Darcy law} that relates the fluid velocity and pressure gradient is represented by \eqref{momentum} and the \textit{conservation of mass} is expressed by \eqref{compress}. The pressure  Dirichlet boundary condition in \eqref{dirichletbc} ensures uniqueness of the pressure. In the forward problem, for a given $K(\bx)$, the goal is to solve for $\bu$ and \( p \), while in the inverse problem, the objective is to infer \( K(\bx) \) from observed flow data.

PINNs have emerged as a powerful framework for solving PDEs by embedding physical laws into neural network training through automatic differentiation and soft-penalty constraints \cite{raissi2019physics}. Since their introduction,
PINNs have been successfully applied to a wide range of forward and inverse problems in computational physics, including incompressible Navier–Stokes flow \cite{gu2024physics,jin2021nsfnets,oldenburg2022geometry}, two-phase porous media flow \cite{gasmi2021physics,hanna2022residual,yan2024physics,yang2023using,zhang2023physics}, and coupled multiphysics systems \cite{khadijeh2025multistage,sun2022physics}.

In porous media applications, PINNs have been used both to predict flow fields and to reconstruct spatially varying permeability structures, e.g., Gasmi and Tchelepi \cite{gasmi2021physics} demonstrated the effectiveness of PINNs in capturing sharp saturation fronts in two-phase
flow. Some other works have focused on inverse modeling, e.g., Berardi et al. \cite{berardi2025inverse} introduced an adaptive inverse PINN that infers transport parameters (including permeability-related quantities) from sparse observations by embedding the unknowns as trainable variables and adaptively weighting multiple loss components during training.

Recently, several innovations have advanced the capabilities of PINNs in solving computationally challenging PDEs. Murari et al.~\cite{murari2025rw-pinn} proposed \textit{Residual-Weighted PINNs (RW-PINNs)}, which dynamically scale PDE residuals to improve training stability in both forward and inverse reaction-diffusion problems, supported by rigorous convergence analysis and empirical results. \textit{Adaptive Residual Splitting PINNs (ARS-PINNs)}, introduced by Cao et al.~\cite{cao2025arspinn}, decompose the global PDE residual into multiple sub-residuals and assign adaptive weights to each. This strategy improves convergence and local accuracy by mitigating the dominance of global residuals in the loss landscape, yielding superior performance in complex and multiscale PDE settings compared to both global and pointwise balancing methods.

Shukla et al.~\cite{shukla2021parallel} developed \textit{Parallel PINNs}, including \textit{conservative PINNs (cPINNs)} and \textit{extended PINNs (XPINNs)}, which employ domain decomposition in space and in space--time, respectively. These methods enable parallel training, localized hyperparameter tuning, and improved scalability for multiscale and inverse problems on complex geometries. Similarly, \textit{Subregion-Identified PINNs (SI-PINNs)}, introduced by Wang et al.~\cite{wang2025si-pinn}, utilize distinct neural networks for different subregions---such as Navier--Stokes, Darcy, and their interface---trained on separate datasets. This region-specific decomposition allows SI-PINNs to reduce training cost while preserving or enhancing accuracy in problems dominated by interface dynamics.

Additionally, \textit{self-adaptive loss balancing} methods~\cite{mcclenny2023self} have gained traction, employing gradient norm--based or uncertainty-based mechanisms to automatically tune loss weights during training, thus reducing the need for manual hyperparameter tuning. Finally, hybrid optimization strategies that combine adaptive optimizers like Weight-decoupled Adaptive Moment Estimation (AdamW)~\cite{loshchilov2017decoupled} or Stochastic Gradient Descent (SGD)~\cite{robbins1951stochastic} with Limited-memory Broyden–Fletcher–Goldfarb–Shanno (L-BFGS) methods~\cite{lu2021deepxde} have been shown to accelerate convergence and enhance solution smoothness.

Despite these advances, three major challenges remain when applying PINNs to Darcy flow in heterogeneous porous media: (i) Reconstructing discontinuous or sharply varying permeability fields. (ii) Avoiding manual tuning of loss weights, which can hinder generalization and stability. (iii) Ensuring robust convergence in inverse problems with limited or noisy data.

\subsection{Significance of the work} To address these issues, we propose a self-adaptive PINN framework for modeling Darcy flow with piecewise-constant permeability. The main novelties of our work are the following:

\begin{itemize}
  \item Region-specific permeability parameterization: Instead of predicting a random spatially varying permeability field, the inverse model learns two scalar permeability values, \( K_1 \) and \( K_2 \), assigned through a predefined geometric mask---a binary function that partitions the spatial domain into subregions. This is similar in spirit to domain decomposition strategies used in PINNs, such as cPINNs and XPINNs, which divide the computational domain into space or space--time subregions to improve scalability, interpretability, and performance~\cite{shukla2021parallel}. This parameterization enhances interpretability, reduces overfitting near discontinuities, and leverages known geometric priors to guide the inverse problem toward meaningful physical estimates.

  \item Self-learned loss reweighting: We propose a trainable loss reweighting scheme that employs \textit{sigmoid-activated parameters} to dynamically modulate the relative influence of PDE residuals, boundary conditions, and data mismatch components during training. While prior work on adaptive PINNs—such as McClenny and Braga-Neto~\cite{mcclenny2023self}—focused primarily on forward problems, our method \textit{extends this adaptive balancing to a coupled forward--inverse setting}, where the forward flow field and inverse parameter (permeability) estimation are learned \textit{jointly}. This unified treatment is particularly effective in \textit{heterogeneous media}, where region-specific parameterization (e.g., binary permeability masks) benefits from flexible loss scaling, leading to \textit{enhanced convergence, improved numerical stability, and greater interpretability}.

  \item Hybrid optimization strategy: We implement a hybrid training algorithm that interleaves optimization steps from AdamW and L-BFGS to enhance convergence stability and efficiency, particularly when recovering spatially varying material parameters from indirect flow observations. AdamW—a decoupled variant of Adam~\cite{kingma2014adam} —provides rapid initial convergence and better generalization through separate weight decay, while L-BFGS, a quasi-Newton method, excels in fine-tuning by leveraging curvature information. Alternating between these two allows us to combine the strengths of both first- and second-order methods. This approach is especially beneficial for stiff inverse problems with sharp parameter discontinuities (e.g., layered permeability). Our strategy extends prior hybrid optimization frameworks by incorporating region-specific parameterization and self-learned loss reweighting. These enhancements enable robust convergence even in discontinuous settings and are supported by recent work showing that Adam+L-BFGS interleaving significantly improves PINN training by reducing loss landscape stiffness~\cite{rathore2024losslandscape}.

  \item Validation: We benchmark both forward and inverse PINNs predictions against the Finite Element Method (FEM) outcomes.
\end{itemize}

The remainder of this paper is organized as follows: Section \ref{prelim} introduces the mathematical preliminaries, variational formulation of \eqref{momentum}-\eqref{neumanbc} and physical setup. Section \ref{justification} presents the PINN architectures, loss formulation, and optimization strategies. Section \ref{numerical-exp} describes the data generation using FEM and evaluation methodology. The forward and inverse PINNs results, their comparison with analogous FEM outcomes and discussions are given in Section \ref{results-and-discussions}. Finally, Section \ref{conclusion} includes the conclusion and future research directions.

\section{Notations and mathematical preliminaries} \label{prelim}
We assume $\Omega\subset \mathbb{R}^d$  for $d\in\{2,3\}$ is a bounded Lipschitz domain, with boundary \( \partial \Omega = \Gamma_D \cup \Gamma_N \), where \( \Gamma_D \cap \Gamma_N = \emptyset \), and the measure \( \mu(\Gamma_D) > 0 \) to ensure well-posedness. The spatial variable $\bx=(x,y)$, and $\bx=(x,y,z)$ for 2D, and 3D problems, respectively. The usual $L^2(\Omega)$ norm and inner product are denoted by $\|.\|$ and $(.,.)$, respectively. The given source term \( f \in L^2(\Omega) \), and the permeability \( K(\bx) \geq K_{\min} > 0 \), where $K_{\min}:=\min\limits_{\bx\in\Omega}K(\bx)$. For the forward problem, we define the following natural spaces for Darcy flow problem:
\begin{align*}
      \bV: &= H(\mathrm{div}; \Omega) = \{ \bv \in (L^2(\Omega))^d : \nabla \cdot \bv \in L^2(\Omega) \},
     \bV^0: = \{ \bv \in \bV : \bv \cdot \bn = 0 \text{ on } \Gamma_N \},\;\text{and } Q: = L^2(\Omega),
\end{align*} where \( \bV \) ensures conformity with the flux field.

\subsection{Variational formulations for the mixed FEM of the forward problem} Multiply both sides of \eqref{momentum}, and \eqref{compress} by the test function $\bv\in\bV^0$, and $q\in Q$, respectively, integrate by parts, and apply boundary condition \eqref{dirichletbc} to obtain the following continuous weak formulation: Find \( (\bu, p) \in \bV^0 \times Q \) such that
\begin{align}
    (K^{-1}\bu, \bv) - (p, \nabla \cdot \bv) &= - \int_{\Gamma_D} p_B \, \bn \cdot \bv \, ds, && \forall \bv \in \bV^0,\label{cpde1} \\
    (\nabla \cdot \bu, q) &= (f, q), && \forall q \in Q. \label{cmass-conser}
\end{align}

To discretize the Darcy system \eqref{cpde1}-\eqref{cmass-conser}, we denote the conforming finite element spaces $\bV_h\subset\bV$, $\bV_h^0\subset\bV^0$, and $Q_h\subset Q$ based on an edge-to-edge triangulations of $\Omega$ with maximum diameter $h$. The discrete algorithm of the Darcy system reads: Find \( (\bu_h, p_h) \in \bV_h^0 \times Q_h \) such that
\begin{align}
    (K^{-1}\bu_h, \bv_h) - (p_h, \nabla \cdot \bv_h) &= - \int_{\Gamma_D} p_B \, \bn \cdot \bv_h \, ds, && \forall \bv \in \bV_h^0,\label{pde1} \\
    (\nabla \cdot \bu_h, q_h) &= (f, q_h), && \forall q_h \in Q_h, \label{mass-conser}
\end{align}
where $\bu_h$ and $p_h$ are the discrete velocity and pressure, respectively.


\section{Theoretical justification of the proposed framework}\label{justification}
This section provides a theoretical foundation for the proposed self-adaptive PINNs framework, focusing on the identifiability of the parametric inverse problem, the motivation behind trainable loss weighting, and the design of the interleaved optimization scheme.

\subsection{Identifiability of the parametric permeability model}

We consider the 2D inverse problem of recovering a piecewise constant permeability field 
\begin{align}
K(x, y) = 
\begin{cases}
K_1, & \text{if } (x, y) \in \Omega_1, \\
K_2, & \text{if } (x, y) \in \Omega_2,
\end{cases}\label{K-function}
\end{align}
for $K_1,K_2\in\mathbb{R}$, from full observations of the velocity field \( \bu_h(x, y) \) and pressure field \( p_h(x, y) \) on the domain \( \Omega \subset \mathbb{R}^2 \). We assume that \( \Omega_1 \subset \Omega \) is a known subregion (e.g., a rectangular inclusion), while the complementary region \( \Omega_2 := \Omega \setminus \Omega_1 \). Thus, the domain is partitioned as \( \Omega = \Omega_1 \cup \Omega_2 \), with \( \Omega_1 \cap \Omega_2 = \emptyset \). The fields \( (\bu_h, p_h) \) are assumed to satisfy the discrete Darcy system \eqref{pde1}--\eqref{mass-conser}.

\begin{lemma}[Identifiability of region-specific constants $K_1,K_2$]\label{identifiability}
Let $\Omega=\Omega_1\cup\Omega_2$ with $\Omega_1\cap\Omega_2=\emptyset$ (up to a set of measure zero),
and let $K(\bx)=K_1\chi_{\Omega_1}(\bx)+K_2\chi_{\Omega_2}(\bx)$ with unknown constants $K_1,K_2>0$.
Assume $(\bu,p)\in (H^1(\Omega))^2\times H^1(\Omega)$ are known a.e. in $\Omega$ and satisfy
Darcy's law
\begin{equation}\label{darcy_a.e.}
\bu(\bx)+K(\bx)\nabla p(\bx)=\mathbf{0}\qquad \text{a.e. in }\Omega.
\end{equation}
If for each $j\in\{1,2\}$ there exists a measurable set $E_j\subset \Omega_j$ with $|E_j|>0$ such that
$|\nabla p(\bx)|>0$ for a.e. $\bx\in E_j$, then $(K_1,K_2)$ is uniquely determined by \eqref{darcy_a.e.}.
\end{lemma}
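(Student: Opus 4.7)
The plan is to reduce the problem to two independent scalar identifiability statements, one on each subregion, and then exploit the nondegeneracy assumption on $\nabla p$ to pin down $K_j$ pointwise. Since $K$ is piecewise constant with a known partition $\Omega=\Omega_1\cup\Omega_2$, the Darcy relation \eqref{darcy_a.e.} restricted to $\Omega_j$ becomes
\begin{equation*}
\bu(\bx)+K_j\,\nabla p(\bx)=\mathbf{0}\qquad\text{a.e. in }\Omega_j,\ j=1,2.
\end{equation*}
Thus the two constants are decoupled, and it suffices to argue the recovery of $K_j$ from $(\bu,\nabla p)$ data on $\Omega_j$.

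Next, I would carry out the algebraic recovery on the nondegeneracy set. For a.e. $\bx\in E_j$, taking the inner product of the restricted Darcy law with $\nabla p(\bx)$ yields
\begin{equation*}
\bu(\bx)\cdot\nabla p(\bx)+K_j\,|\nabla p(\bx)|^2=0,
\end{equation*}
and since $|\nabla p(\bx)|>0$ a.e. on $E_j$, we may solve
\begin{equation*}
K_j=-\frac{\bu(\bx)\cdot\nabla p(\bx)}{|\nabla p(\bx)|^2}\qquad\text{a.e. }\bx\in E_j.
\end{equation*}
The right-hand side is determined entirely by the known data $(\bu,p)$, which identifies $K_j$ uniquely as a real constant. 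Equivalently, I would phrase this in contrapositive form: if $K_j$ and $K_j'$ both satisfied Darcy's law, subtracting would give $(K_j-K_j')\nabla p=0$ a.e.\ on $\Omega_j$; the existence of $E_j$ with $|E_j|>0$ and $|\nabla p|>0$ a.e.\ on $E_j$ forces $K_j=K_j'$.

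Repeating the argument for $j=1$ and $j=2$ completes the proof, since the argument on $\Omega_j$ used only the data restricted to that subregion. I do not expect a real obstacle here: the lemma is essentially algebraic once the Darcy equation is localized, and the only technicality is ensuring that the a.e. relations are compatible with the positive-measure assumption on $E_j$, which is handled by the standard fact that if $g=0$ a.e.\ on a set of positive measure but $g\neq 0$ there, the set must have measure zero. The nondegeneracy hypothesis $|\nabla p|>0$ on a set of positive measure within each $\Omega_j$ is clearly necessary: if $\nabla p\equiv 0$ a.e.\ on some $\Omega_j$, then $\bu=0$ there by Darcy's law and $K_j$ becomes invisible to the measurements, precluding identifiability.
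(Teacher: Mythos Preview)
Your proof is correct and follows essentially the same approach as the paper: dot the Darcy relation with $\nabla p$, divide by $|\nabla p|^2$ on the nondegeneracy set $E_j$, and read off $K_j$ as the constant value of $-\bu\cdot\nabla p/|\nabla p|^2$ there. The only cosmetic difference is that you restrict to $\Omega_j$ first and then derive the formula, whereas the paper derives the pointwise formula on $\{|\nabla p|>0\}$ and then restricts; your added contrapositive and necessity remarks are correct but not needed.
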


\begin{proof}
Let $\bx\in \Omega$ be such that \eqref{darcy_a.e.} holds and $|\nabla p(\bx)|>0$. Taking the dot product of
\eqref{darcy_a.e.} with $\nabla p(\bx)$ yields
\[
0=\bu(\bx)\cdot \nabla p(\bx)+K(\bx)|\nabla p(\bx)|^2,
\]
and therefore
\begin{equation}\label{K_formula}
K(\bx)=-\frac{\bu(\bx)\cdot \nabla p(\bx)}{|\nabla p(\bx)|^2}
\qquad \text{for a.e. }\bx\in\{|\nabla p|>0\}.
\end{equation}
By assumption, $|\nabla p|>0$ a.e. on $E_j\subset \Omega_j$, hence \eqref{K_formula} determines $K(\bx)$ uniquely
for a.e. $\bx\in E_j$. Since $K(\bx)\equiv K_j$ a.e. on $\Omega_j$, this uniquely identifies $K_j$.
Doing this for $j=1,2$ yields uniqueness of $(K_1,K_2)$.
\end{proof}

The Lemma \ref{identifiability} provides a theoretical justification for the inverse problem setup: Under ideal data availability and perfect modeling, the region-specific parameterization of \( K(\bx) \) is uniquely identifiable.

\subsection{Self-adaptive loss weighting and gradient balancing}
\label{subsec:adaptive-weights}

In practice, training PINNs involves optimizing a composite loss function composed of multiple competing objectives: The PDE residual, boundary conditions, and data mismatch terms \cite{raissi2019physics}. Assigning equal weights often yields instability due to disparate gradient magnitudes across loss terms \cite{wang2021understanding}; adaptive weighting methods address this by balancing gradients \cite{chen2018gradnorm,mcclenny2023self}. To mitigate this, we introduce trainable weights \( \omega_i \) that dynamically adjust during training:
\begin{equation}
\omega_i = \sigma(\alpha_i) s_i, \quad i = 1, \dots, m,
\label{eq:trainable-loss}
\end{equation}
where \( \alpha_i \in \mathbb{R} \) are unconstrained trainable parameters, \( s_i > 0 \) are fixed scaling constants, \( \sigma \) is the sigmoid activation function, and \( m \) is the number of distinct loss terms. This formulation ensures that each weight remains strictly positive and bounded within \( (0, s_i) \), promoting stable optimization and avoiding manual hyperparameter tuning \cite{chen2018gradnorm, kendall2018multi,mcclenny2023self}.

\textbf{Forward PINN loss:} For the forward problem, where the permeability \( K \) is known, the total loss $\mathcal{L}_{\text{forward}}$ of the forward problem is defined as:
\begin{equation}
\mathcal{L}_{\text{forward}} = \omega_1  \mathcal{L}_{\text{PDE}} + \omega_2 \mathcal{L}_{\text{BC}}.
\label{eq:total-forward-loss}
\end{equation}
Here, \( \mathcal{L}_{\text{PDE}} \) enforces the governing Darcy momentum \eqref{momentum} and continuity \eqref{compress} equations across the interior of the domain by penalizing residuals computed at collocation points, while \( \mathcal{L}_{\text{BC}} \) penalizes violations of the Dirichlet and Neumann boundary conditions.

\noindent The individual loss terms are defined as:
\begin{align}
\mathcal{L}_{\text{PDE}}: &= \left\| \nabla \cdot \bu_\theta - f \right\|^2 + \left\| \bu_\theta + K \nabla p_\theta \right\|^2, \\
\mathcal{L}_{\text{BC}}: &= \left\| p_\theta - p_B \right\|^2_{\Gamma_D} + \left\| \bu_\theta \cdot \bn - u_B \right\|^2_{\Gamma_N}.
\end{align}

Here, $\bu_\theta$ and $p_\theta$ denote the velocity and pressure fields predicted by the forward PINNs model, where $\theta$ represents the trainable parameters of the neural network.

All loss terms are measured in $L^2$ \cite{raissi2019physics, bochev2009least}. This choice corresponds to a maximum-likelihood objective under Gaussian perturbations \cite{hastie2009elements} and yields smooth, stable gradients for first/second-order optimizers such as L-BFGS.

\textbf{Inverse PINN loss:} For the inverse problem, where the spatial permeability field $K$ is unknown but the noisy or sparse observations of the flow field are given, then the total loss function $\mathcal{L}_{\text{inverse}}$ of the inverse problem includes an additional data-fidelity term compared to the forward loss function:
\begin{equation}
\mathcal{L}_{\text{inverse}} = \omega_1  \mathcal{L}_{\text{PDE}} + \omega_2  \mathcal{L}_{\text{BC}} + \omega_3  \mathcal{L}_{\text{data}}.
\end{equation}

Here, \( \mathcal{L}_{\text{data}} \) penalizes the discrepancy between the model predictions and observed velocity and/or pressure data. The individual loss terms are defined as:
\begin{align}
\mathcal{L}_{\text{PDE}}: &= \left\| \nabla \cdot \bu_\phi - f \right\|^2 + \left\| \bu_\phi + K_\phi \nabla p_\phi \right\|^2, \\
\mathcal{L}_{\text{BC}}: &= \left\| p_\phi - p_B \right\|^2_{\Gamma_D} + \left\| \bu_\phi \cdot \bn - u_B \right\|^2_{\Gamma_N}, \\
\mathcal{L}_{\text{data}}: &= \left\| \bu_\phi - \bu_\theta \right\|^2 + \left\| p_\phi - p_\theta \right\|^2.
\end{align}

Here, $\bu_\phi$, $p_\phi$ and $K_\phi$ denote the velocity, pressure, and permeability fields predicted by the inverse PINN model, where $\phi$ represents the trainable parameters of the neural network.

This approach is conceptually related to adaptive loss weighting strategies such as GradNorm~\cite{chen2018gradnorm} and uncertainty-based multi-task learning~\cite{kendall2018multi}, but here it is embedded directly within a physics-constrained PINN framework. Similar ideas have also been explored in the context of PINNs~\cite{mcclenny2023self}, though the proposed method is tailored for coupled forward-inverse training under sharp permeability discontinuities.

\subsection{Effect of Velocity Regularity on Inverse Permeability Identification}

In the numerical experiments, finite element solutions of the Darcy system occasionally exhibit localized non-smooth features in the velocity magnitude $\|\bu_h\|$. These irregularities are primarily induced by discretization effects, such as element-wise polynomial approximations, mesh anisotropy, or post-processing of vector fields into scalar magnitudes, and do not represent physical discontinuities of the underlying flow.

The inverse problem considered in this work seeks to identify the permeability field $K(\bx)$ from observations of the flow governed by Darcy's law,
\begin{equation}
\bu = -K \nabla p, \qquad -\nabla \cdot (K \nabla p) = f \quad \text{in } \Omega.
\end{equation}
Importantly, the permeability enters the model through the pressure gradient and flux balance rather than through pointwise variations of the velocity magnitude. The inverse PINN minimizes a composite loss of the form
\begin{equation}
\mathcal{L} = \|\nabla \cdot (K \nabla p)\|_{L^2(\Omega)}^2
+ \|\bu + K \nabla p\|_{L^2(\Omega)}^2
+ \|\bu - \bu^{\text{obs}}\|_{L^2(\Omega)}^2,
\end{equation}
which enforces global consistency of Darcy’s law and mass conservation in an integral sense.

Localized numerical irregularities in $\|\bu_h\|$ typically occur on sets of small measure and do not significantly alter the $L^2$-based residuals driving the inverse optimization. From a stability perspective, inverse coefficient identification problems are known to depend primarily on global energy norms of the solution rather than pointwise regularity; see, for example, \cite{isakov2017inverse, engl1996regularization}. Consequently, small-scale oscillations or discretization-induced artifacts in the velocity field have limited influence on the recovered permeability.

Moreover, neural network approximations implicitly favor smooth solutions due to their continuous functional representation and optimization bias. This implicit regularization effectively filters non-physical, discretization-level features while preserving the dominant pressure gradients and flux structures responsible for parameter identifiability. As a result, the inverse PINN can recover accurate permeability fields even when the reference velocity data contain localized non-smoothness.

\subsection{Motivation for interleaved first- and second-order optimization}

Gradient-based optimization of PINNs is known to suffer from flat or stiff loss landscapes, particularly in inverse problems~\cite{rathore2024losslandscape,wang2021understanding}. First-order optimizers such as \textit{AdamW} are efficient in the early stages of training due to adaptive learning rates and momentum, but often fail to converge to highly accurate solutions \cite{raissi2019physics}. \textit{AdamW} differs from standard Adam by decoupling weight decay from gradient updates, improving convergence and generalization in overparameterized models~\cite{loshchilov2017decoupled}.

Quasi-Newton methods such as \textit{L-BFGS}, on the other hand, leverage curvature information to provide rapid convergence near local minima but are computationally expensive and less stable when used alone, particularly in large-batch or noisy-gradient settings~\cite{nocedal1980updating}. To exploit the complementary strengths of both optimizers, we alternate between short blocks of AdamW steps and L-BFGS refinements. This hybrid scheme accelerates early training via AdamW and improves final accuracy and smoothness through L-BFGS, mitigating stagnation in inverse PINNs with stiff dynamics or discontinuous coefficients.

Proposed approach is inspired by classical hybrid optimization regimes and aligns with recent PINNs literature~\cite{rathore2024losslandscape}, which shows that hybrid training improves loss landscape conditioning and accelerates convergence.

\section{Numerical validation and error estimation}\label{numerical-exp}

To evaluate the accuracy and effectiveness of the proposed PINNs framework, we conduct a series of numerical experiments comparing PINNs predictions to FEM solutions. Both the forward and inverse PINNs models are assessed in terms of their ability to replicate velocity and pressure fields, as well as recover spatially varying permeability. To this end, we solve the 2D Darcy flow system \eqref{momentum}-\eqref{neumanbc} using FEM and forward PINN, and then present the inverse PINN for predicting the heterogeneous permeability field. This section outlines the process for generating reference data using FEM, describes the error metrics used for evaluation, and presents detailed comparisons between the predicted and reference solutions through spatial error maps, contour plots, and pointwise metrics.

\subsection{Finite element discretization and reference solutions}

To validate the proposed PINN framework, we compute reference solutions for the Darcy flow problem \eqref{momentum}–\eqref{neumanbc} using the finite element method (FEM). These FEM solutions are used exclusively for benchmarking, visualization, and quantitative error analysis of the PINN results.

We employ the Raviart--Thomas element of order zero (RT0) for the velocity approximation and piecewise constant polynomial $\mathbb{P}_0$ elements for the pressure approximation to solve the weak formulation in \eqref{pde1}–\eqref{mass-conser}. The computational domain is the unit square $\Omega = [0,1]^2$, discretized using a structured triangular mesh, resulting in $240{,}800$ velocity degrees of freedom and $80{,}000$ pressure degrees of freedom. In all numerical experiments, we consider a constant unit source term $f(\bx)=1$ in $\Omega$.

The permeability field $K$ is prescribed as a piecewise constant function,
\begin{align}\label{k-function2}
K(x,y) =
\begin{cases}
0.1,  & \text{if } (x,y)\in[0.4,0.6]\times[0.4,0.6], \\
0.05, & \text{if } (x,y)\in \Omega \setminus [0.4,0.6]\times[0.4,0.6],
\end{cases}
\end{align}
representing a high-permeability inclusion embedded in a lower-permeability matrix. This discontinuous binary structure models heterogeneous porous media and provides a non-smooth reference field for assessing inverse reconstruction accuracy.

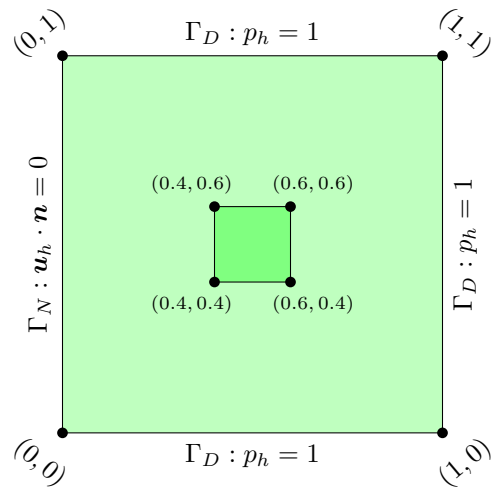
\begin{wrapfigure}[20]{r}[-1.0cm]
{0.35\textwidth}\centering
 \vspace{-\normalbaselineskip}
\begin{tikzpicture}
	\draw[fill=green!25] (0,0) rectangle (5,5);
	\draw[fill=green!50] (2,2) rectangle (3,3);

	\node at (2.5,-0.3) {$\Gamma_D:p_h=1$};
	\node at (5.3,2.5) {\rotatebox{90}{$\Gamma_D:p_h=1$}};
	\node at (2.5,5.3) {$\Gamma_D:p_h=1$};
	\node at (-0.3,2.5) {\rotatebox{90}{$\Gamma_N:\bu_h\cdot\bn=0$}};

	\node at (-0.3,-0.4) {\rotatebox{-40}{$(0,0)$}};
	\node at (5.3,-0.4) {\rotatebox{40}{$(1,0)$}};
	\node at (5.3,5.3) {\rotatebox{-40}{$(1,1)$}};
	\node at (-0.3,5.3) {\rotatebox{40}{$(0,1)$}};

	\node at (1.7,3.3) {\scriptsize{$(0.4,0.6)$}};
	\node at (3.3,3.3) {\scriptsize{$(0.6,0.6)$}};
	\node at (1.7,1.7) {\scriptsize{$(0.4,0.4)$}};
	\node at (3.3,1.7) {\scriptsize{$(0.6,0.4)$}};

	\fill (0,0) circle[radius=2pt];
	\fill (0,5) circle[radius=2pt];
	\fill (5,5) circle[radius=2pt];
	\fill (5,0) circle[radius=2pt];

	\fill (2,2) circle[radius=2pt];
	\fill (2,3) circle[radius=2pt];
	\fill (3,2) circle[radius=2pt];
	\fill (3,3) circle[radius=2pt];
\end{tikzpicture}
\caption{Schematic of the computational domain and imposed boundary conditions.}
\label{boundary-schematic}
\end{wrapfigure}

Mixed boundary conditions are imposed as follows: a Dirichlet boundary condition $p_h = p_B = 1$ is applied on the bottom, right, and top boundaries, while a homogeneous Neumann boundary condition $\bu_h \cdot \bn = 0$ (no-flux) is enforced on the left boundary. A schematic of the computational domain and boundary conditions is shown in Fig.~\ref{boundary-schematic}. The FEM implementation is carried out in the C++ finite element platform FreeFem++ \cite{MR3043640}, and the resulting sparse saddle-point system is solved using the direct solver UMFPACK \cite{davis2004algorithm}. The computed velocity and pressure fields $(\bu_h, p_h)$ are interpolated onto a uniform $101 \times 101$ Cartesian grid and exported as tabular data. These interpolated FEM solutions are used solely as reference data for visual comparison and quantitative error evaluation of the PINN predictions, and are treated as the benchmark throughout this work.



\textbf{Data format:} The exported data includes the coordinates \( (x, y) \), velocity components \( u_x, u_y \) for $\bu_h=( u_x, u_y)$, and pressure \( p_h \). This dataset is stored in a structured plain text file and used consistently for both forward model comparison and inverse permeability reconstruction.

\subsection{Experimental configuration}

The experimental setup used to train the forward and inverse PINNs models. All experiments were implemented in Python using PyTorch and executed on the Cheaha high-performance computing cluster at the University of Alabama at Birmingham. Computations were performed on NVIDIA Tesla P100 GPUs (Pascal architecture) with 16 GB memory, using CUDA 12.2 and cuDNN 8.9.2.26.

\textbf{Forward PINN architecture:}
We use a fully connected feedforward network with six hidden layers that maps spatial coordinates \((x,y)\) to the predicted fields \((\bu_{\theta},p_{\theta})\).
All hidden layers employ the GELU activation~\cite{hendrycks2016gaussian}, and residual (skip) connections are included to improve gradient flow.
Weights are initialized with Xavier initialization~\cite{glorot2010understanding} to maintain stable variance across layers.
Automatic differentiation (AutoDiff) is used to obtain all required spatial derivatives for the PDEs and boundary terms, and trainable adaptive weights \(\omega_i=\sigma(\alpha_i)s_i\) are employed to balance the multi-term loss during training. A schematic of the forward PINN is shown in the accompanying diagram \ref{forward-schematic}.
\begin{figure} [h!]
		\centering
		{\includegraphics[scale=.7]{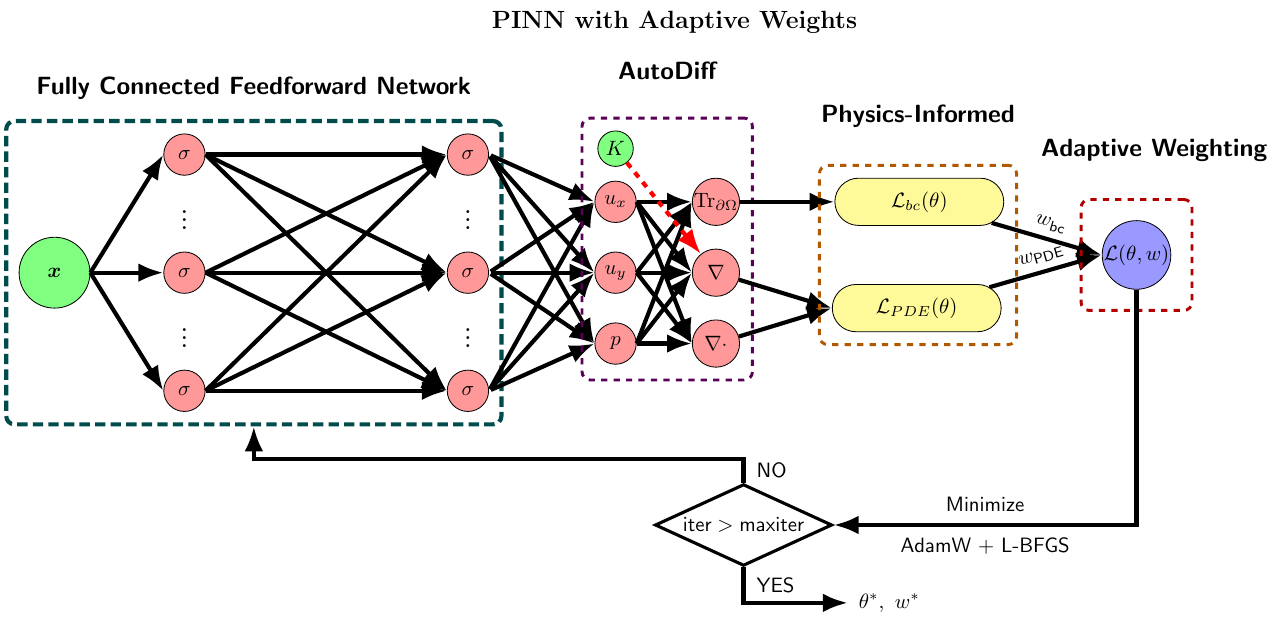}}
		
     \caption{Forward PINN schematic diagram.}
    \label{forward-schematic}
\end{figure}

\textbf{Inverse PINN Architecture:}
We use the same fully connected feedforward network with six hidden layers, mapping \((x,y)\) to \((\bu_\phi,p_\phi)\) and the inferred permeability \(K_1, K_2\). The same GELU activations, residual connections, and Xavier initialization are used. Automatic differentiation supplies spatial derivatives, and the same trainable adaptive weights \(\omega_i=\sigma(\alpha_i)s_i\) balance the multi-term loss. A schematic of the inverse PINN is shown in the accompanying diagram \ref{inverse-schematic}.

\begin{figure} [H]
		\centering
		{\includegraphics[scale=.7]{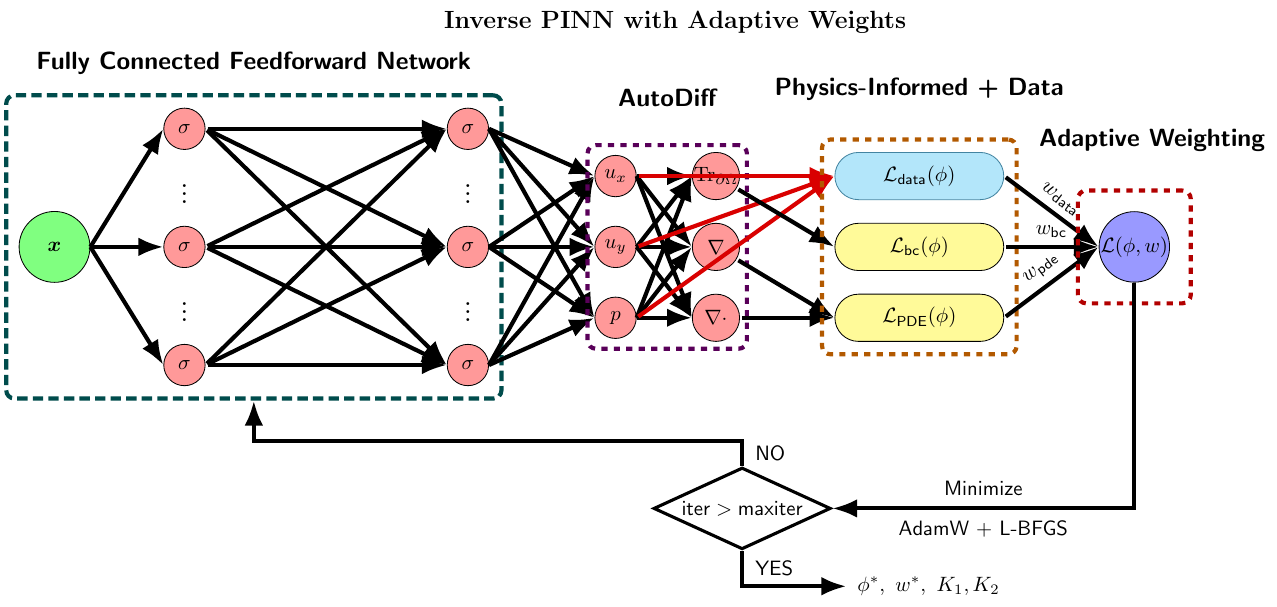}}
		
     \caption{Inverse PINN schematic diagram.}
    \label{inverse-schematic}
\end{figure}

\textbf{Adaptive loss weights:} We implement the self-adaptive loss weighting scheme described in Section ~\ref{subsec:adaptive-weights}, where each loss term is assigned a trainable weight of the form
\[
\omega_i = \sigma(\alpha_i) s_i,
\]
where \( \alpha_i \in \mathbb{R} \) are learnable parameters, \( \sigma \) is the sigmoid function, and \( s_i \) are fixed base scales. These weights are updated alongside the network parameters via backpropagation \cite{rumelhart1986learning}, enabling dynamic balancing of PDE residuals, boundary condition penalties \cite{raissi2019physics}, and data mismatch losses \cite{sun2022physics} throughout training.

\textbf{Collocation and boundary points:}
The training dataset consists of $10{,}201$ points on a uniform $101\times101$ Cartesian grid over $\Omega=[0,1]^2$, including boundary nodes. Points on $y=0$, $y=1$, and $x=1$ enforce Dirichlet conditions, points on $x=0$ enforce the Neumann condition, and the remaining points contribute to the interior PDE residual. This unified discretization imposes physics and boundary constraints simultaneously without excluding boundary nodes.

For the inverse problem, the same grid is reused, with each point assigned synthetic velocity and pressure values obtained from the trained forward PINN, providing full-field observations for permeability recovery.

\textbf{Optimization strategy:} Training is performed in two phases. Phase~I uses AdamW with learning rate $\eta=10^{-3}$, weight decay $10^{-6}$, gradient clipping at $2.0$, and a scheduler that halves $\eta$ after 700 stagnant epochs. Phase~II applies L-BFGS with strong Wolfe line search~\cite{nocedal2006numerical} in blocks of 100 iterations, interleaved with AdamW steps and using a history size of 50. Each model is trained for approximately $10{,}000$ epochs (6{,}000 AdamW epochs followed by 40 alternating AdamW--L-BFGS blocks).

\textbf{Implementation details:} The models are implemented in PyTorch with native automatic differentiation~\cite{paszke2019pytorch} to compute spatial derivatives in the loss terms. The entire pipeline, including data loading, model training, and visualization, is reproducible via open-source scripts and notebooks, which are available at \url{https://github.com/MAAziz23/Darcy-PINN}.

\section{Results and discussions} \label{results-and-discussions}
In this section, we represent and analyze the outcomes of the forward FEM, the forward PINN, and the inverse PINN for the Darcy flow problem we have considered.

\subsection{Forward PINN results}
\label{sec:forward-results}

We assess the forward PINN for the Darcy problem with known permeability $K(x,y)$, which predicts the velocity--pressure pair $(\bu_\theta, p_\theta)$ over the domain and is evaluated against a high-fidelity FEM reference. Specifically, we present (i) field-level comparisons of the velocity magnitude $\|\bu\|$ and pressure $p$ (forward PINN vs.\ FEM), (ii) pointwise absolute error maps for $u_x, u_y,$ and $p$, and (iii) aggregate errors measured by the root mean squared error (RMSE) for each output component, providing a coherent view of qualitative agreement and quantitative accuracy.\\

\noindent\textbf{Domain-wide speed field \(\|\bu\|\):}
With auto-scaled colorbars, the forward PINN and FEM speed fields show strong qualitative agreement, including elevated speeds along the top and bottom Dirichlet edges, attenuation toward the no-flux boundary at $x=0$, and a broad low-speed core near mid-height. Both solutions exhibit similar lateral contour bending due to the two-level permeability, while the PINN produces slightly smoother fine-scale features than FEM. Overall, the comparison confirms close agreement in global flow structure and boundary-layer behavior.

\begin{figure} [ht]
		\centering
		{\includegraphics[scale=.29]{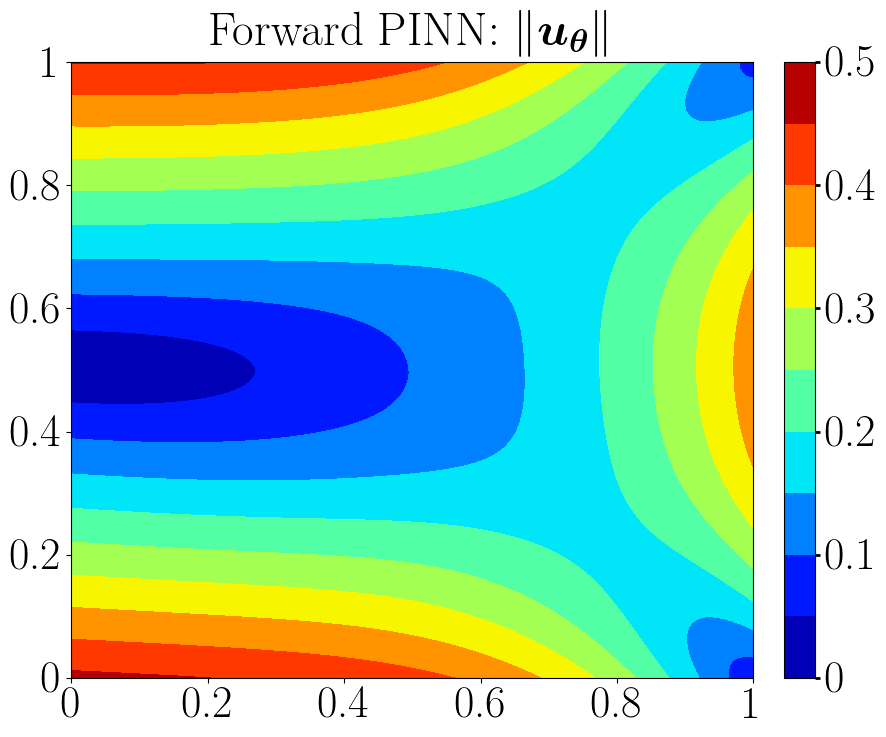}}\hspace{7mm}
		{\includegraphics[scale=.29]{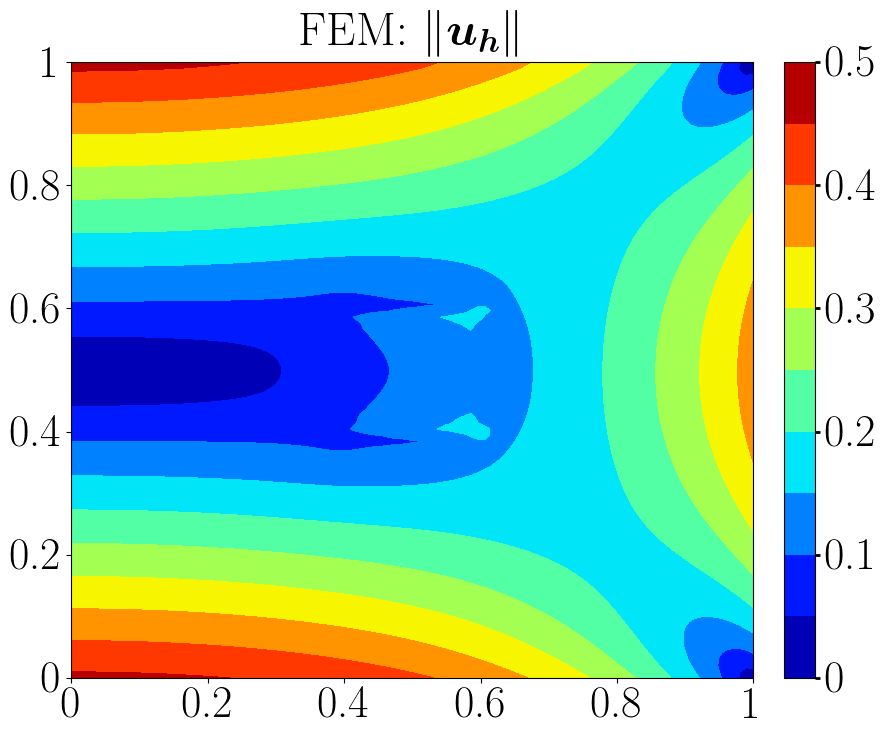}}
        \captionsetup{
    width=0.8\linewidth,
    justification=centering,
    singlelinecheck=false}
		\caption{\small Speed ($\|\bu\|$) comparison between the forward PINN (left) and the FEM reference solution (right).}
    \label{fig:for-velocity}
	\end{figure}


\textbf{Domain-wide pressure field \(p\):}
The pressure fields from the forward PINN and FEM show strong qualitative agreement, with a high-pressure lobe near the left boundary and smooth decay toward the right, consistent with the prescribed boundary and flux conditions. Similar iso-contour bending in the upper and lower halves indicates that the forward PINN captures the dominant pressure gradients. At finer scales, the PINN solution is slightly smoother, exhibiting marginally reduced peak amplitudes and more diffuse transitions near the right boundary. Overall, the comparison confirms close agreement in the global pressure distribution, with only minor smoothing of sharp features by the PINN.

\begin{figure} [ht]
		\centering
		{\includegraphics[scale=.29]{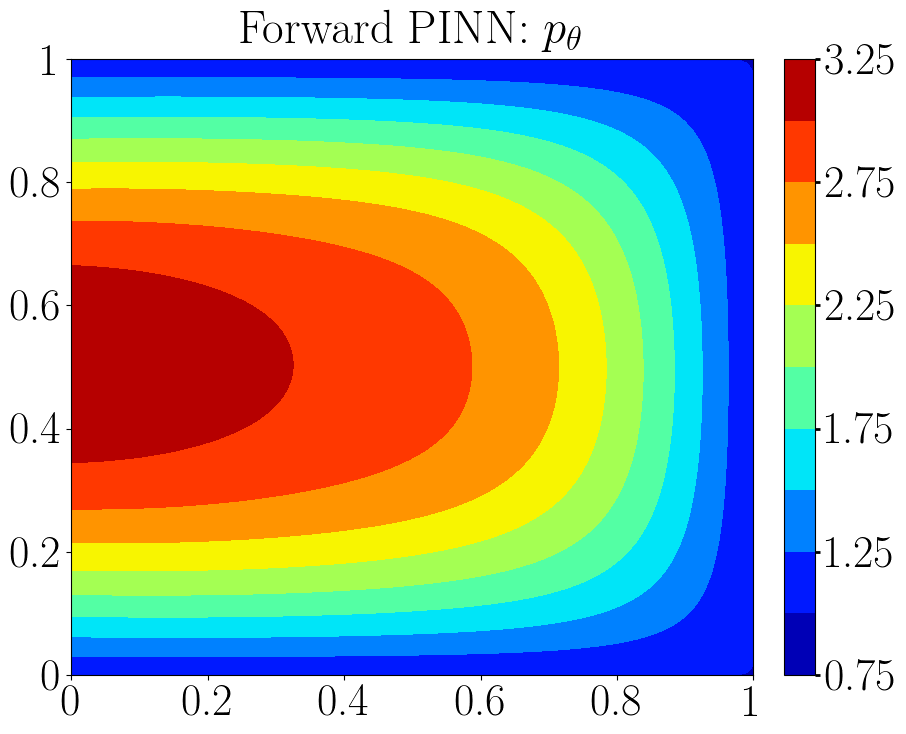}}\hspace{7mm}
		{\includegraphics[scale=.29]{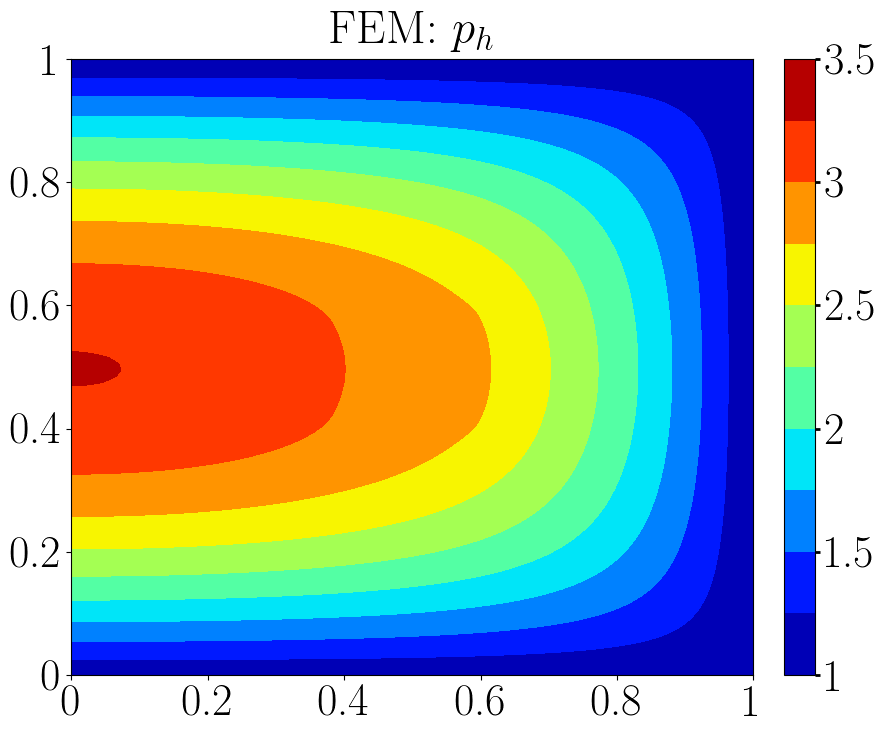}}
        \captionsetup{
    width=0.8\linewidth,
    justification=centering,
    singlelinecheck=false}
		\caption{\small Pressure field \( p(x, y) \) comparison between the forward PINN (Left) and FEM reference solution (Right).}
    \label{fig:for-pressure}
	\end{figure}


\noindent\textbf{Pointwise absolute-error maps for \(u_x,u_y,p\):}
We quantify local discrepancies using the absolute error
\begin{equation}    
AE_q(x,y)\;=\;\bigl|\,q^{\theta}(x,y)\;-\;q^{h}(x,y)\,\bigr|\,, 
\qquad q\in\{u_x,u_y,p\}.
\end{equation}

Figure~\ref{fig:Error-map-for} shows \(AE_q(x,y)\) evaluated at each grid point for \(q\in\{u_x,u_y,p\}\).
The velocity errors \((u_x,u_y)\) (left and middle) are small over most of the domain and are primarily
concentrated near the permeability interface, where gradients change abruptly, and in a thin strip adjacent
to the domain corners. The \(u_x\) panel exhibits a ring of elevated error encircling the interior high–low
transition, while the \(u_y\) panel shows narrow vertical bands consistent with the local shear. By contrast,
the pressure error (right) presents a broader, low-amplitude pattern that mirrors the large-scale curvature
of \(p\); the error remains modest away from boundaries and peaks mildly along the interior bend.
Overall, these maps indicate that discrepancies are localized around regions of strongest variation, with
velocity components showing sharper, interface-aligned features and pressure exhibiting a smoother,
domain-wide envelope of small error.
\begin{figure}[h!]
    \centering
    \includegraphics[width=0.9\textwidth]{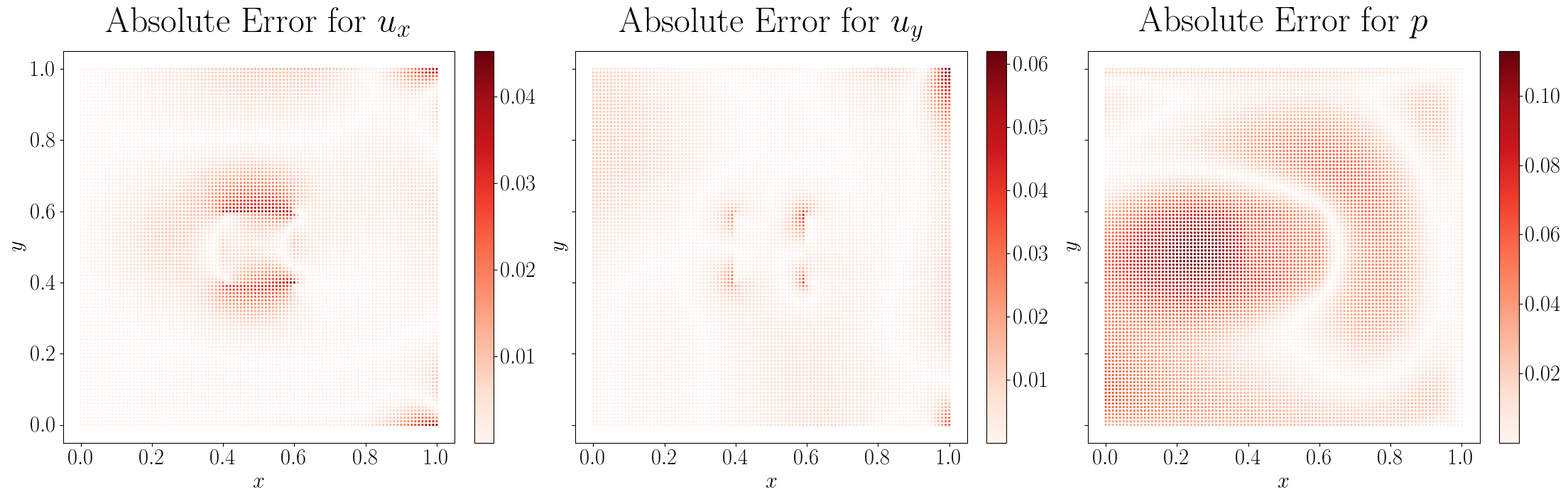}
    \captionsetup{
    width=0.8\linewidth,
    justification=centering,
    singlelinecheck=false}
    \caption{\small Pointwise absolute error maps for $u_x$ (left), $u_y$ (middle), and $p$ (right) computed
    between the forward PINN and FEM reference solution.}
    \label{fig:Error-map-for}
\end{figure}


\textbf{Aggregate error via RMSE.}
To quantify prediction accuracy across output fields, we report the root mean squared error (RMSE)
\begin{equation}
\mathrm{RMSE}_{\text{forward}}(q)
=\sqrt{\frac{1}{N}\sum_{i=1}^{N}\bigl(q_i^{\theta}-q_i^{h}\bigr)^2},
\qquad q\in\{u_x,u_y,p\},
\label{eq:mse_forward}
\end{equation}
where \(N=10,201\) is the total number of evaluation points, \(q_i^{\theta}\) is the Forward PINN prediction, and
\(q_i^{h}\) is the FEM reference at point \(i\). RMSE provides a single scalar, in the same units as \(q\),
for direct comparison across \(u_x,u_y\), and \(p\).

\begin{table}[h!]
\centering
\begin{tabular}{|c|c|c|c|}
\hline
 & \(u_x\) & \(u_y\) & \(p\) \\\hline
RMSE & 0.006880 & 0.006736 & 0.040034 \\\hline
\end{tabular}
\caption{RMSE of Forward PINN predictions relative to FEM for velocity components and pressure.}
\label{tab:MSE-for}
\end{table}

Taken together, the field-level comparisons (Figs.~\ref{fig:for-velocity}--\ref{fig:for-pressure}),
the pointwise absolute-error maps (Fig.~\ref{fig:Error-map-for}), and the RMSE metrics
(Table~\ref{tab:MSE-for}) indicate that the Forward PINN reproduces the FEM solution with high fidelity.
Discrepancies are localized primarily near the permeability interface and corners, while the rest of the
domain is visually indistinguishable. Quantitatively, the RMSEs are small relative to the field ranges
(velocity components $\sim 6\!\times\!10^{-3}$–$7\!\times\!10^{-3}$ versus speeds up to $\approx 0.5$; pressure
$\approx 4\!\times\!10^{-2}$ versus values up to $\approx 3.5$), underscoring close agreement in both
global organization and local detail.

The agreement between PINNs and FEM results confirms the effectiveness of our mesh-free approach in replicating complex porous flow behavior under known permeability conditions.


\subsection{Inverse PINN Results}
\label{sec:inverse-results}

We now assess the ability of the inverse PINN to recover the permeability field \(K(x,y)\) from observed pressure and velocity. The true permeability is piecewise constant with two regions as in \eqref{K-function}; the interface between \(\Omega_1\) and \(\Omega_2\) is known a priori, while the values \(K_1\) and \(K_2\) are inferred using physical constraints, boundary conditions, and the observed velocity–pressure fields from the forward PINN.

\medskip

\emph{Remark:} The inverse problem uses synthetic data generated by the forward PINN, which may lead to optimistic performance metrics. Applications to experimental data would require additional regularization and uncertainty quantification.

\medskip 

\textbf{Global comparison of the permeability field \(K(x,y)\):} The panels in Figure~\ref{fig:k-comparison} show the true field alongside the inverse PINN prediction over the full domain of the permeability field \(K(x,y)\). The
predicted map captures the two-region geometry and the contrast between the high- and low-permeability
zones, with the interface aligned to the prescribed partition. Visual agreement at the domain scale
indicates that the model has learned a field consistent with the assumed piecewise structure rather than
introducing spurious gradients or artifacts near the interface. This whole-domain view is important:
even if the region-wise constants are accurately estimated, an incorrect interface or smeared transition
would manifest as noticeable discrepancies in these plots.

\begin{figure} [ht]
		\centering
		{\includegraphics[scale=.45]{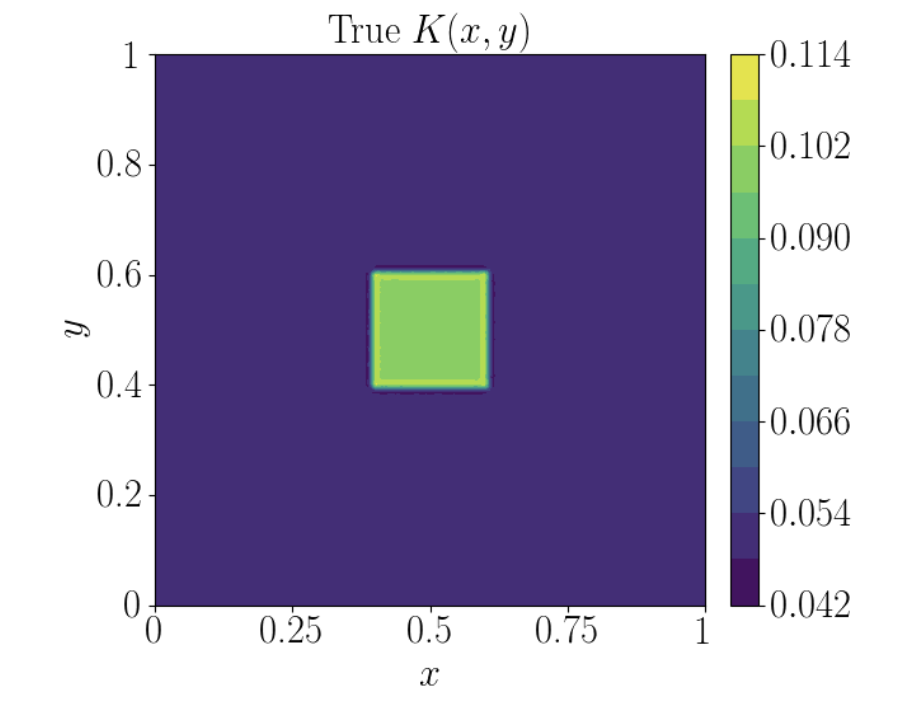}}\hspace{7mm}
		{\includegraphics[scale=.45]{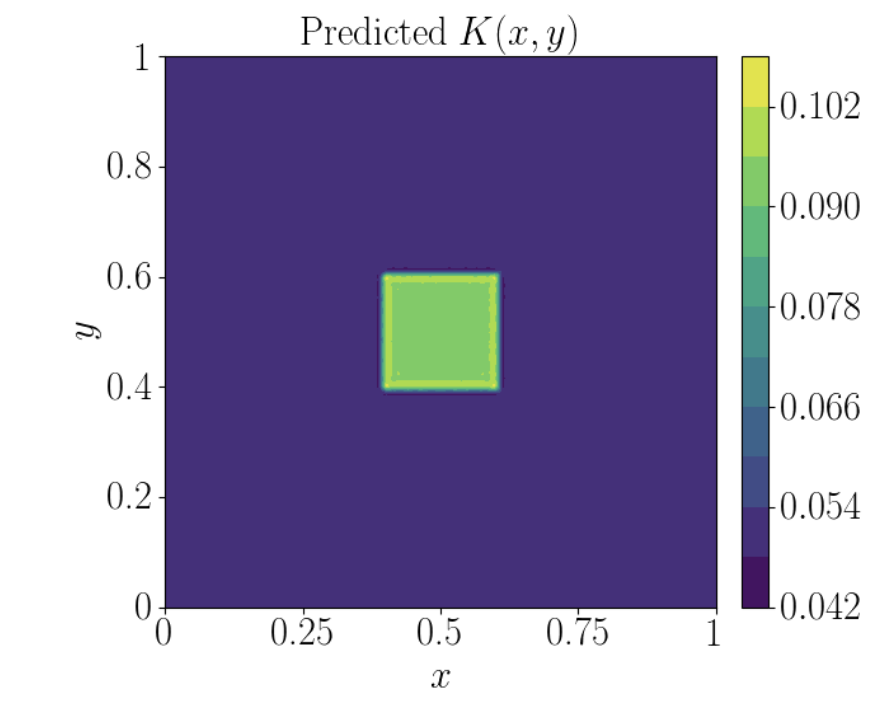}}          \captionsetup{
    width=0.8\linewidth,
    justification=centering,
    singlelinecheck=false}             \caption{\small Spatial comparison over the full domain: true versus inverse-PINN–predicted permeability fields \(K(x,y)\).
    The panels show that the inferred map reproduces the two-region geometry and the jump across the interface.}
    \label{fig:k-comparison}
\end{figure}

\noindent
\textbf{Local point verification of the permeability field \(K(x,y)\):}  To complement the global map, we report pointwise
checks at one interior reference location per region. Specifically, we evaluate \(K(x,y)\) at \((0.5,0.5)\),
representative of \(\Omega_1\), and at \((0.1,0.1)\), representative of \(\Omega_2\). Under the piecewise-constant
assumption, the true values at these points equal the corresponding region constants,
\(K(0.5,0.5)=K_1\) and \(K(0.1,0.1)=K_2\). Table~\ref{tab:K-values} lists the true constants next to the
inverse-PINN predictions at those reference points. Agreement here verifies that the network not only
reconstructs the correct interface at the field level but also attains the correct constant levels within
each subregion away from the interface.

\begin{table}[ht]
\centering
\begin{tabular}{|c|c|c|c|}
\hline
 & Reference point & True & Predicted \\
\hline
\( K_1 \) & \((0.5,\,0.5)\) & 0.10 & 0.095945 \\\hline
\( K_2 \) & \((0.1,\,0.1)\) & 0.05 & 0.050259 \\
\hline
\end{tabular}
\caption{Region-wise point verification for \(K(x,y)\): true constants versus inverse-PINN predictions at one interior reference point per region. The local checks corroborate the whole-domain agreement observed in Figure~\ref{fig:k-comparison}.}
\label{tab:K-values}
\end{table}


\noindent\textbf{Domain-wide speed field \(\|\bu\|\):}  Figure~\ref{fig:inv-velocity} shows that, using the permeability recovered by the inverse model, the induced speed field \(\|\bu\|\) shows
strong qualitative agreement with the FEM reference. Both solutions exhibit elevated speeds along the
Dirichlet boundaries, attenuation toward the no-flux edge at \(x=0\), and a broad low-speed core near
mid-height. The interface-driven bending of contours on the right is reproduced, indicating that the
estimated \(K_\phi\) yields the correct flow organization. Minor differences appear as smoother interior
features in the inverse result, but the overall structure and boundary-layer behavior are closely aligned.

\begin{figure} [ht]
		\centering
		{\includegraphics[scale=.29]{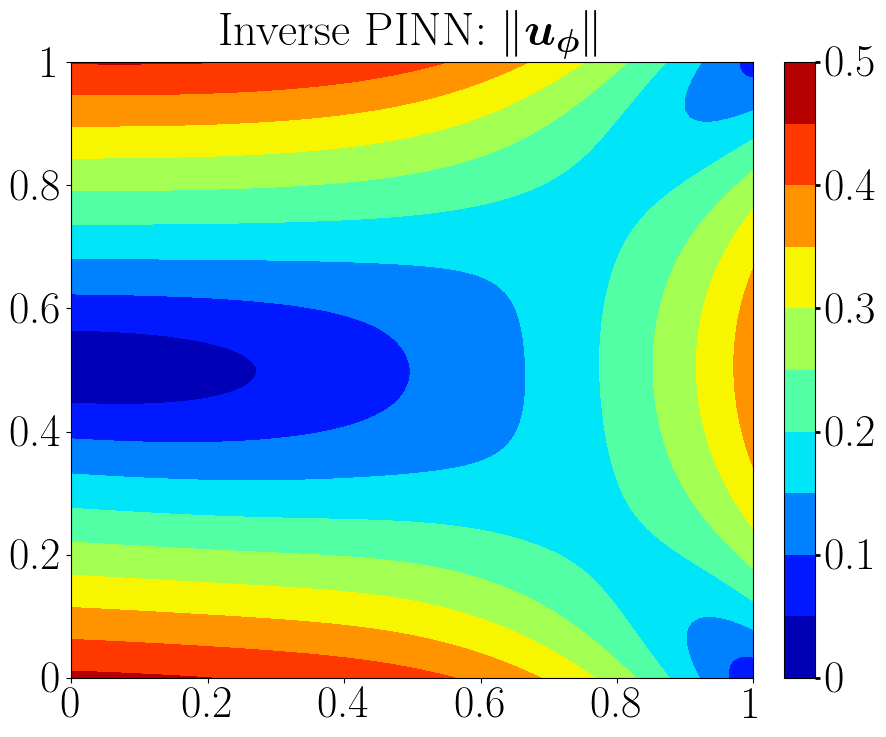}}\hspace{7mm}
		{\includegraphics[scale=.29]{Figures/Velocity_FreeFEM.png}}
        \captionsetup{
    width=0.8\linewidth,
    justification=centering,
    singlelinecheck=false}
		\caption{\small Speed ($\|\bu\|$) comparison between the inverse PINN (left) and the FEM reference solution (right).}
     \label{fig:inv-velocity}
	\end{figure}

\noindent\textbf{Domain-wide pressure field \(p\): Inverse PINN vs.\ FEM.}
Figure~\ref{fig:inv-pressure} shows that, using the permeability recovered by the inverse model, the induced
pressure field closely matches the FEM reference across the domain. Both solutions exhibit a high-pressure
lobe near the left boundary with a smooth monotonic decay toward the right, and the iso-contour curvature
is aligned in the upper and lower halves. The inverse result is slightly smoother near the interior peak and
along the right boundary layer, but no systematic bias is observed. Overall, the inverse PINN reproduces the
global pressure distribution and gradients consistent with the governing physics and the estimated permeability.

\begin{figure} [ht]
		\centering
		{\includegraphics[scale=.29]{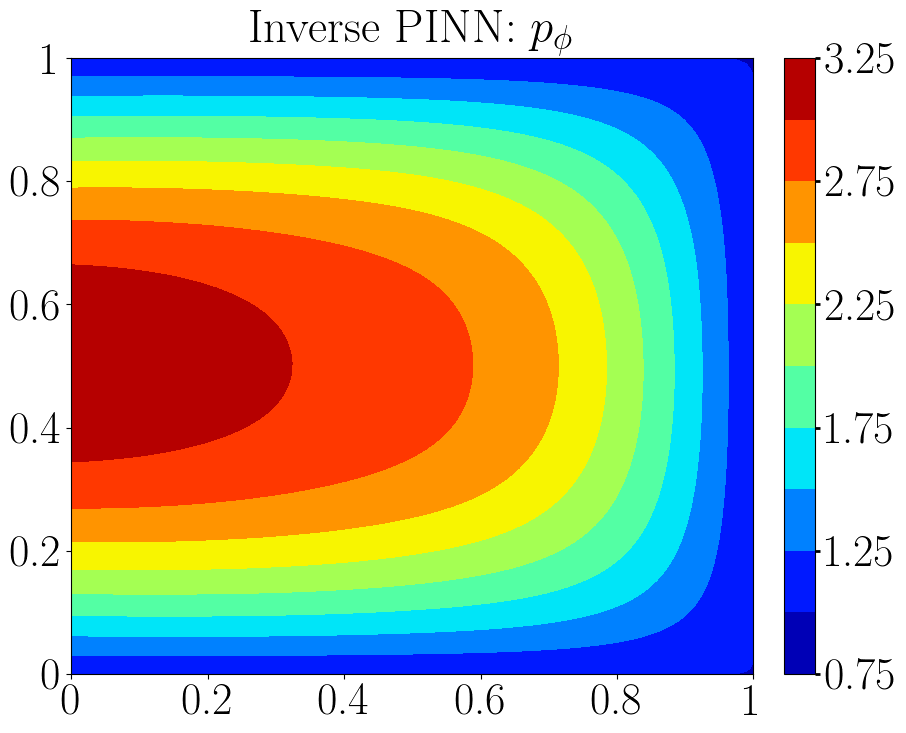}}\hspace{7mm}
		{\includegraphics[scale=.29]{Figures/Pressure_FreeFEM.png}}
        \captionsetup{
    width=0.8\linewidth,
    justification=centering,
    singlelinecheck=false}
		\caption{\small Pressure field \( p(x, y) \) comparison between the inverse PINN (Left) and FEM reference solution (Right).}
    \label{fig:inv-pressure}
	\end{figure}


\textbf{Pointwise absolute error maps:}
Figure~\ref{fig:Error-map-inv} visualizes the absolute error for \(u_x, u_y, p, \text{ and } K\). For the variables \(u_x,u_y,p\);
the errors are small over most of the domain and concentrate near the interior interface,
where gradients change most rapidly; faint bands at the outer boundaries are also visible.
The \(u_x\) map forms a ring around the interface, while \(u_y\) shows narrow vertical
streaks consistent with local shear. The pressure error presents a broader, low-amplitude
pattern that mirrors the large-scale curvature of \(p\).
For the parameter field \(K\), the error is sharply localized to the
interior square and decays to near-zero elsewhere, indicating accurate recovery of the piecewise-constant permeability and a well-resolved jump across the interface.
Together, these maps show that discrepancies are localized to regions of strongest variation, supporting the consistency of the recovered \(K\) and the induced state fields.

\begin{figure}[h!]
    \centering
    \includegraphics[width=0.9\textwidth]{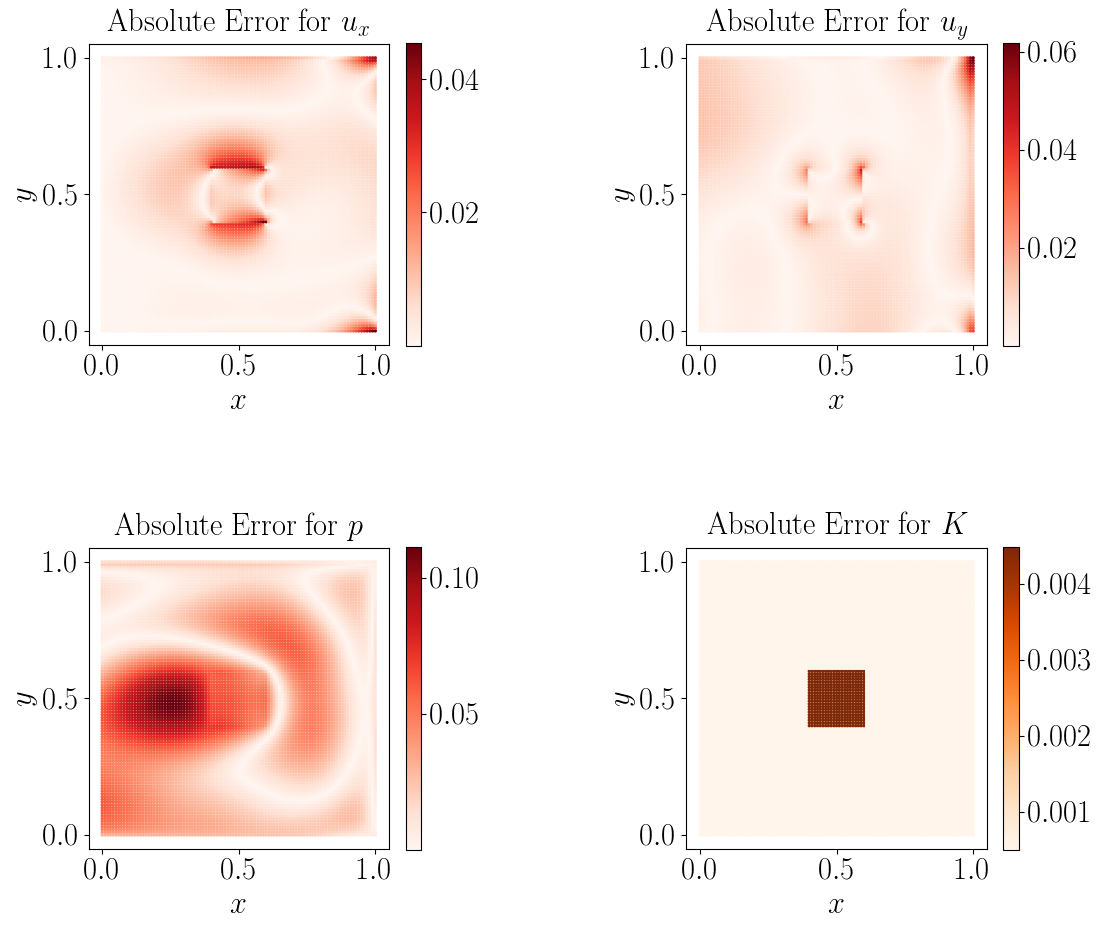}
    \captionsetup{
    width=0.8\linewidth,
    justification=centering,
    singlelinecheck=false}
    \caption{\small Pointwise absolute error maps for $u_x$ (top-left), $u_y$ (top-right), and $p$ (bottom-left) between the inverse PINN and the FEM reference solution, and for $K$ (bottom-right) with respect to the true permeability field.}
    \label{fig:Error-map-inv}
\end{figure}



\textbf{Aggregate error (inverse) via RMSE.}
The root mean squared error (RMSE) for the inverse PINN—computed against the FEM reference for each field—is summarized in Table~\ref{tab:MSE-inv}. 
RMSE is reported in the native units of each quantity and complements the pointwise error maps by providing a single, domain-wide measure of accuracy. 
Consistently small RMSEs for \(u_x,u_y\), and \(p\) indicate close agreement with the reference solution across the domain, while the low RMSE for \(K\) reflects accurate recovery of the piecewise-constant permeability (including the jump across the interface). 
Together with the visual comparisons, these values confirm that the inferred parameters yield state fields that are physically and numerically consistent with FEM.

\begin{table}[H]
\centering
\begin{tabular}{|c|c|c|c|c|}
\hline
 & \(u_x\) & \(u_y\) & \(p\) & \(K\) \\\hline
RMSE & 0.006881 & 0.006734 & 0.039922 & 0.001052 \\\hline
\end{tabular}
\caption{RMSE of inverse PINN predictions relative to FEM for velocity components, pressure, and permeability.}
\label{tab:MSE-inv}
\end{table}

\section{Conclusion and future research directions}\label{conclusion}

In this work, we proposed a robust PINN framework for solving both forward and inverse Darcy flow problems in porous media. Our method integrates several key innovations: (i) a self-adaptive loss weighting strategy to dynamically balance competing loss terms, (ii) skip connections and GELU activations to enhance gradient flow and expressivity, and (iii) a hybrid optimization scheme that interleaves AdamW and L-BFGS to stabilize training and improve convergence, particularly for stiff inverse problems with discontinuous coefficients.

The forward PINN successfully approximates velocity and pressure fields given a known permeability distribution, while the inverse PINN effectively recovers spatially varying permeability from indirect flow observations. Experimental results demonstrate high predictive accuracy across all physical variables, including the discontinuous permeability field. Notably, the use of self-learned loss weights eliminated the need for manual hyperparameter tuning and significantly improved training stability.


Future research directions include several promising avenues:

\begin{itemize}
  \item \textit{Extension to time-dependent flows:} Incorporating transient terms into the PINN formulation to handle unsteady Darcy or Navier-Stokes \cite{berry2025efficient} problems.
  
  \item \textit{Uncertainty quantification:} Integrating Bayesian PINNs or ensemble approaches \cite{raveendran2025efficient} to quantify prediction confidence, especially for inverse reconstructions.
  
  \item \textit{Multiscale modeling:} Applying the framework to multiscale or heterogeneous media, potentially using domain decomposition or localized basis functions.
  
  \item \textit{Generalization across domains:} Investigating how well trained PINNs can transfer or adapt to new geometries or boundary conditions via meta-learning or fine-tuning.

  \item \textit{Robustness to real data:} The current study uses synthetic data for inversion validation. Future work will test the framework on experimental or field datasets with noise and missing values, and incorporate uncertainty quantification methods to improve reliability in practical applications.
\end{itemize}

    
\section{Acknowledgment} The NSF is acknowledged for supporting this research through the grant DMS-2425308. We are grateful for the generous allocation of computing time provided by the Cheaha high-performance computing cluster at the University of Alabama at Birmingham.


\bibliographystyle{plain}


\bibliography{refs}
\end{document}